\providecommand{\U}[1]{\protect\rule{.1in}{.1in}}
\numberwithin{equation}{section}
\newtheorem{theorem}{Theorem}[section]
\newtheorem{corollary}[theorem]{Corollary}
\newenvironment{proof}[1][Proof]{\noindent \textbf{#1.} }{\  \rule{0.5em}{0.5em}}
\begin{document}

\title{A Note on the Nash Equilibria of Some\\Multi-Player Reachability / Safety Games }
\author{Ath. Kehagias}
\date{\today}
\maketitle

\begin{abstract}
In this short note we study a class of multi-player, turn-based games with
deterministic state transitions and reachability / safety objectives (this
class contains as special cases \textquotedblleft classic\textquotedblright%
\ two-player reachability and safety games \ as well as multi-player and
\textquotedblleft\textquotedblleft stay--in-a-set\textquotedblright\ and
\textquotedblleft reach-a-set\textquotedblright\ games). Quantitative and
qualitative versions of the objectives are presented and for both cases we
prove the existence of a deterministic and memoryless Nash equilibrium; the
proof is short and simple, using only Fink's classic result about the
existence of Nash equilibria for \emph{multi-player discounted stochastic
games}.

\end{abstract}

\section{Introduction\label{sec01}}

The simplest $\omega$-regular games are, arguably, two-player turn-based
safety and reachability games \cite{Mazala2002}. \emph{Multiplayer} variants
of these are the \textquotedblleft\emph{Stay--in-a-set}\textquotedblright%
\ (SIAS) games \cite{Maitra2003,Secchi2002} and \textquotedblleft%
\emph{Reach-a-set}\textquotedblright\ (RAS) games
\cite{Chatterjee2003,Chatterjee2004a}. The existence of \emph{Nash equilibria}
(NE)\ has been proved: for SIAS games in \cite{Secchi2002} and for RAS games
in \cite{Chatterjee2003,Chatterjee2004a}; in particular in the special case of
turn-based games with deterministic state transitions and Borel objectives
(these include SIAS and RAS objectives) the existence of a pure \emph{Nash
equilibrium} (NE) is proved in \cite[Corollary 1]{Chatterjee2004a}. In both
cases the state space is assumed finite and the NE are not, in general,
\ memoryless. \footnote{In \cite[Theorem 1]{Chatterjee2004a} is also proved
the existence of memoryless $\varepsilon$-NE for a broader class, which
contains SIAS\ and RAS games.} A stronger result is proved in
\cite{Ummels2010}, namely: every turn-based multi-player game with
deterministic state transitions and Borel objectives possesses a pure
\emph{sub-game perfect} (and hence memoryless) equilibrium. These results are
quite general but their proofs are rather involved.

In the current note our main goal is to provide a \emph{short and simple}
proof of a special case: every turn-based SIAS and RAS\ game with
deterministic state transitions possesses a deterministic and memoryless NE.
This is proved using only Fink's classic result on the existence of NE for
\emph{multi-player discounted stochastic games} \cite{Fink1961}.

Our result is actually a little more general, in that it applies to the class
of multi-player, turn-based games with deterministic state transitions,
reachability objectives for some players and safety objectives for others. For
brevity, we will henceforth refer to these as \emph{multi-player reachability
/ safety games} (MPRS games); they contain as special cases classic
reachability and safety games as well as SIAS and RAS\ games.

Informally, the MPRS game can best be visualized as a \emph{graphical game},
in which $N$ players move a token along the arcs of a digraph $G=\left(
V,E\right)  $. The vertices of $G$ are partitioned into $N$ sets:\ $V=V_{1}%
\cup V_{2}\cup...\cup V_{N}$; if at the $t$-th turn the token is located on a
vertex $v_{t}\in V_{n}$, then it is moved by the $n$-th player (henceforth
denoted by $P_{n}$)\ into some vertex $v_{t+1}$ such that $\left(
v_{t},v_{t+1}\right)  $ is an arc of $G$. In general we have two type of
players: \emph{reachers} and \emph{avoiders}. To each $P_{n}$ is associated a
\emph{nonempty} set $R_{n}\subseteq V$, related to his objective. If $P_{n}$
is a reacher, he wins iff the token enters some vertex $v\in R_{n}$; if he is
an avoider, he wins iff the token never enters a vertex $v\in R_{n}$.

In Section \ref{sec02} we define the \emph{quantitative} MPRS game and prove
that every such game has a NE\ in deterministic memoryless strategies. In
Section \ref{sec03} we do the same things for the \emph{qualitative} MPRS game.

\section{The Quantitative MPRS Game\label{sec02}}

We now formulate MPRS as a \emph{discounted stochastic game}.\footnote{We
follow the formulation of \cite{Filar1997}, expanded to the multi-player
case.} In what follows the quantities $N$, $V$, $E$, $V_{1}$, ..., $V_{N}$,
$R_{1}$, ..., $R_{N}$ are the ones presented in the previous section.

\begin{enumerate}
\item The \emph{player set} is $\left\{  P_{1},P_{2},...,P_{N}\right\}  $ or,
for simplicity, $\left\{  1,2,...,N\right\}  $.

\item The \emph{state set} is $S:=V\cup\left\{  \overline{s}\right\}  $, where
$V$ is the vertex set of the previously mentioned $G=\left(  V,E\right)  $ and
$\overline{s}$\ is the \emph{terminal state.}

\item We define $\left\{  S_{1},...,S_{N}\right\}  $, a partition of $S$, as
follows: $S_{1}:=V_{1}\cup\left\{  \overline{s}\right\}  $, $S_{2}:=V_{2}$,
..., $S_{N}:=V_{N}$.

\item For $n\in\left\{  1,2,...,N\right\}  $, $P_{n}$'s \emph{target set} is
$R_{n}$; the \emph{total target set} is $R:=\cup_{m=1}^{N}R_{m}$.

\item $A_{n}\left(  s\right)  $\ denotes $P_{n}$'s \emph{action set} when the
game is at state $s$ and is defined by ($\lambda$ is the \textquotedblleft
trivial\textquotedblright\ move):%
\begin{align*}
\text{when }s &  \in S_{n}\backslash R:A_{n}\left(  s\right)  :=\left\{
s^{\prime}:\left(  s,s^{\prime}\right)  \in E\right\}  ;\\
\text{when }s &  \in S_{m}\backslash R,m\neq n:A_{n}\left(  s\right)
:=\left\{  \lambda\right\}  ;\\
\text{when }s &  \in R\cup\left\{  \overline{s}\right\}  :A_{n}\left(
s\right)  :=\left\{  \lambda\right\}  .
\end{align*}
$P_{n}$'s \textquotedblleft total\textquotedblright\ action set is
$A_{n}:=\cup_{s\in S}A_{n}\left(  s\right)  $.

\item The \emph{law of motion} is deterministic and has the following form%
\begin{align}
\text{when }s  &  \in S_{n}\backslash R\text{ and }a=\left(  \lambda
,...,a^{n},...,\lambda\right)  :\Pr\left(  s_{t+1}=s^{\prime}|s_{t}%
=s,a_{t}=a\right)  :=\left\{
\begin{array}
[c]{ll}%
1 & \text{when }s^{\prime}=a^{n}\text{, }\\
0 & \text{else;}%
\end{array}
\right. \label{eq001}\\
\text{when }s  &  \in R\cup\left\{  \overline{s}\right\}  \text{ and
}a=\left(  \lambda,...,\lambda,...,\lambda\right)  :\Pr\left(  s_{t+1}%
=s^{\prime}|s_{t}=s,a_{t}=a\right)  :=\left\{
\begin{array}
[c]{ll}%
1 & \text{when }s^{\prime}=\overline{s}\text{, }\\
0 & \text{else.}%
\end{array}
\right.  \label{eq001b}%
\end{align}
All admissible state/action combinations are covered by (\ref{eq001}%
)-(\ref{eq001b}), from which we see the following.

\begin{enumerate}
\item If the current state $s$ \textquotedblleft belongs\textquotedblright\ to
$P_{n}$ (i.e., $s\in S_{n}$) and is not a target state, then he is the only
player who can perform a non-trivial action $a^{n}\in V$; the next state is,
with certainty, $a^{n}$.

\item If the current state $s$ is either target or terminal, then the only
admissible action vector is $a=\left(  \lambda,...,\lambda,...,\lambda\right)
$; the next and all subsequent states are the terminal $\overline{s}$.
\end{enumerate}

It is convenient to describe the \emph{deterministic} state transitions in
terms of a state transition function $\mathbf{T}:S\times A\rightarrow S$,
defined by
\begin{equation}
\mathbf{T}\left(  s,a^{n}\right)  :=\left\{
\begin{array}
[c]{ll}%
a^{n} & \text{when }s\in S_{n}\backslash R\text{ and }a^{n}\in A^{n}\left(
s\right)  \backslash\lambda,\\
\overline{s} & \text{when }s\in R\cup\left\{  \overline{s}\right\}  \text{ and
}a^{n}=\lambda.
\end{array}
\right.  \quad\label{eq002}%
\end{equation}
All admissible state/action combinations are covered by (\ref{eq002}).

\item $P_{n}$'s \emph{turn payoff function} depends only on the current game
state $s$ (but not on the current action vector) and can be either of the
following:
\[
q^{n}\left(  s\right)  :=\left\{
\begin{array}
[c]{ll}%
1 & \text{when }s\in R_{n}\\
0 & \text{when }s\notin R_{n}%
\end{array}
\right.  \ \ \text{(}P_{n}\text{ is a reacher)};\qquad q^{n}\left(  s\right)
:=\left\{
\begin{array}
[c]{rl}%
-1 & \text{when }s\in R_{n}\\
0 & \text{when }s\notin R_{n}%
\end{array}
\right.  \ \ \text{(}P_{n}\text{ is an avoider).}%
\]
$P_{n}$'s \emph{total payoff} function is (with the \emph{discount factor}
$\gamma\in\left(  0,1\right)  $): $Q^{n}\left(  s_{0},s_{1},...\right)
=\sum_{t=0}^{\infty}\gamma^{t}q^{n}\left(  s_{t}\right)  $.
\end{enumerate}

The game starts at an initial state $s_{0}=s\in S\backslash\overline{s}$ and,
at the $t$-th turn ($t\in\left\{  0,1,2,...\right\}  $)\ all players perform
\textquotedblleft trivial\textquotedblright\ moves, except for the player who
\textquotedblleft owns\textquotedblright\ $s_{t}$. Two possibilities exist.

\begin{enumerate}
\item If a target state is entered at some time $t^{\prime}$ ($s_{t^{\prime}%
}=s^{\prime}\in R=\cup_{m=1}^{N}R_{m}$) the next and all subsequent states are
the terminal ($\forall t>t^{\prime}:s_{t}=\overline{s}$).\footnote{Hence,
while the game lasts an infinite number of turns, it \emph{effectively} ends
at $t^{\prime}.$} \ For each $n\in\left\{  1,...,N\right\}  $, $P_{n}$
receives total $_{{}}$payoff:
\[
Q^{n}\left(  s_{0},s_{1},...\right)  =\left\{
\begin{array}
[c]{rl}%
\gamma^{t^{\prime}} & \text{if }s^{\prime}\in R_{n}\text{ and he is a
reacher;}\\
-\gamma^{t^{\prime}} & \text{if }s^{\prime}\in R_{n}\text{ and he is an
avoider;}\\
0\,\,\,\, & \text{if }s^{\prime}\notin R_{n}\text{.}%
\end{array}
\right.
\]

\item If a target state is never entered ($\forall t:s_{t}\notin R$), the game
continues ad infinitum and all players receive zero payoff.
\end{enumerate}

\noindent A reacher (resp. avoider)$\ P_{n}$ wants the game to enter $R_{n}$
in the shortest (resp. longest)\ possible \ time. Hence the above defined
discounted stochastic game will be called \textquotedblleft\emph{quantitative
MPRS game}\textquotedblright.

A \emph{finite-length history} is a finite sequence of states (we omit player
actions, since they will not be needed in our proof\footnote{Besides they are
directly inferred from the states, due to the deterministic law of motion.}):
\[
h=s_{0}s_{1}...s_{k}\in\underset{k\text{ times}}{\underbrace{S\times
S\times...\times S}}\qquad\text{for some }k\in\left\{  1,2,...\right\}  ;
\]
the set of all finite-length histories is denoted by $H^{\ast}$. A
\emph{deterministic strategy} for the $n$-th player is a function $\sigma^{n}$
which assigns an action to each finite-length history: $\sigma^{n}:H^{\ast
}\rightarrow A_{n}$. A strategy $\sigma^{n}$ is called \emph{memoryless} if it
only depends on the current state, in which case we write (with a slight
notation abuse) $\sigma^{n}\left(  s_{0}s_{1}...s_{k}\right)  =\sigma
^{n}\left(  s_{k}\right)  $. A \emph{strategy profile} is a tuple
$\sigma=\left(  \sigma^{1},\sigma^{2},...,\sigma^{N}\right)  $ which specifies
one strategy for each player. As usual, $\sigma^{-n}=\left(  \sigma
^{j}\right)  _{j\in\left\{  1,2,...,N\right\}  \backslash\left\{  n\right\}
}$, so we can write $\sigma=\left(  \sigma^{n},\sigma^{-n}\right)  $. Since an
initial state $s_{0}$ and a deterministic strategy profile $\sigma$ determine
fully the history $s_{0}s_{1}s_{2}...$, the payoff function $Q^{n}\left(
s_{0},s_{1},...\right)  \ $will also be written as $Q^{n}\left(  s_{0}%
,\sigma\right)  $, $Q^{n}\left(  s_{0},\sigma^{1},...,\sigma^{N}\right)  $ or
$Q^{n}\left(  s_{0},\sigma^{n},\sigma^{-n}\right)  $.

\begin{theorem}
\label{prop0301}Every quantitative MPRS game has a deterministic memoryless
NE. In other words, there exists a profile of deterministic memoryless
strategies $\widehat{\sigma}=\left(  \widehat{\sigma}^{1},\widehat{\sigma}%
^{2},...,\widehat{\sigma}^{N}\right)  $ such that%
\begin{equation}
\forall n\in\left\{  1,2,...,N\right\}  ,\forall s_{0}\in S,\forall\sigma
^{n}:Q^{n}\left(  s_{0},\widehat{\sigma}^{n},\widehat{\sigma}^{-n}\right)
\geq Q^{n}\left(  s_{0},\sigma^{n},\widehat{\sigma}^{-n}\right)  .
\label{eq02011}%
\end{equation}
For every $s$ and $n$, let $u^{n}\left(  s\right)  :=Q^{n}\left(
s,\widehat{\sigma}\right)  $. Then the \ following equations are satisfied%
\begin{align}
\forall n,\forall s  &  \in S_{n}:\widehat{\sigma}^{n}\left(  s\right)
=\arg\max_{a^{n}\in A^{n}\left(  s\right)  }\left[  q^{n}\left(  s\right)
+\gamma u^{n}\left(  \mathbf{T}\left(  s,a^{n}\right)  \right)  \right]
,\label{eq02012}\\
\forall n,m,\forall s  &  \in S_{n}:u^{m}\left(  s\right)  =q^{m}\left(
s\right)  +\gamma u^{m}\left(  \mathbf{T}\left(  s,\widehat{\sigma}^{n}\left(
s\right)  \right)  \right)  . \label{eq02013}%
\end{align}

\end{theorem}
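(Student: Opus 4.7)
My plan is to recognize the quantitative MPRS game as a special case of a finite multi-player discounted stochastic game, for which the state space $S$ and action sets $A_n(s)$ are finite, the transitions are the degenerate kernel (\ref{eq001})--(\ref{eq001b}), the per-stage payoffs are $q^n$, and the discount factor is $\gamma\in(0,1)$. Fink's theorem \cite{Fink1961} then applies directly and yields a stationary (memoryless) Nash equilibrium $\tilde\sigma = (\tilde\sigma^1,\dots,\tilde\sigma^N)$, \emph{a priori} in mixed strategies, with values $\tilde u^n(s) := Q^n(s,\tilde\sigma)$.

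The key structural observation I would exploit is that the game is turn-based: at every state $s\in S_n$ only player $n$ has a nontrivial action, since $A_m(s)=\{\lambda\}$ for $m\neq n$. The NE best-response condition for player $n$ at $s$ therefore collapses to the statement that the support of $\tilde\sigma^n(s)$ is contained in $\arg\max_{a\in A_n(s)}\bigl[q^n(s)+\gamma\,\tilde u^n(\mathbf{T}(s,a))\bigr]$. I would then purify by defining a deterministic memoryless profile $\hat\sigma$ that selects, for each $s\in S_n$, some $\hat\sigma^n(s)$ in this argmax (and $\lambda$ elsewhere).

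Next I would verify that $\hat\sigma$ is itself a NE with induced values $u^n:=Q^n(\cdot,\hat\sigma)$ satisfying (\ref{eq02012})--(\ref{eq02013}). Once the NE property (\ref{eq02011}) is established, equation (\ref{eq02013}) follows by unrolling the definition of $Q^n$ one step (using determinism of $\hat\sigma$), and (\ref{eq02012}) restates the best-response condition at $s\in S_n$.

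The main obstacle is showing that purification preserves the NE property. The subtle point is that while all actions in the support of $\tilde\sigma^n(s)$ give player $n$ the same continuation value, they need not give a different player $m$ the same continuation value, so after purifying $u^m$ may drift away from $\tilde u^m$ and player $m$'s strategy could fail to remain a best response at his own states. I plan to close this gap by combining the contraction property of the discounted Bellman operator (enabled by $\gamma\in(0,1)$) with the finite, deterministic, turn-based structure of the game, arguing that $\tilde u$ itself is a fixed point of the Bellman system associated with $\hat\sigma$, so that $u=\tilde u$ by uniqueness; as a fallback, one can apply Fink to generically perturbed payoffs $q^n_\varepsilon$ (where the argmax sets are singletons and Fink's equilibrium is automatically pure) and pass to an $\varepsilon\to 0$ limit using finiteness of the pure-strategy space.
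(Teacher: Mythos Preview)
Your approach matches the paper's almost line for line: invoke Fink \cite{Fink1961} to obtain a stationary (possibly mixed) NE, then use the turn-based structure to collapse each per-state equilibrium condition to a one-player maximization over $A_n(s)$, and purify. The paper carries out exactly this argument and in fact glosses over the very purification subtlety you flag: it simply asserts that since each player ``can, without loss, use deterministic strategies'' one has $\mathfrak{u}^n = u^n$, without addressing how replacing $P_n$'s mixed choice at $s\in S_n$ by a single argmax action alters $P_m$'s continuation value and hence $P_m$'s own best-response conditions elsewhere.

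Where your proposal needs care is the primary fix. Showing that $\tilde u$ is a fixed point of the Bellman system for $\hat\sigma$ requires, for every $s\in S_n$ and every $m\neq n$, that
\[
\tilde u^m(s)=q^m(s)+\gamma\,\tilde u^m\bigl(\mathbf{T}(s,\hat\sigma^n(s))\bigr);
\]
but Fink's equation only gives $\tilde u^m(s)$ as the \emph{average} of $q^m(s)+\gamma\,\tilde u^m(\mathbf{T}(s,a))$ over $a$ in the support of $\tilde\sigma^n(s)$ --- which is precisely the obstruction you yourself point out two sentences earlier. So the contraction/uniqueness route does not close the gap as stated. Your perturbation fallback is on firmer ground (finitely many pure memoryless profiles, continuity of $Q^n$ in the stage payoffs, pigeonhole along $\varepsilon\to0$), though the claim that Fink's equilibrium is automatically pure for generic $q^n_\varepsilon$ still needs a genuine genericity argument. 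In short: you and the paper share the same skeleton and the same soft spot; your write-up is more honest about where the soft spot lies, but your primary patch does not repair it and you should lean on the fallback.
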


\begin{proof}
Fink has proved in \cite{Fink1961} that every $N$-player discounted stochastic
game has a memoryless NE in \emph{probabilistic} strategies; this result holds
for the general game (i.e., with \emph{concurrent} moves and probabilistic
strategies and state transitions). According to \cite{Fink1961}, at
equilibrium the following equations must be satisfied for all $m$ and $s$:
\begin{equation}
\mathfrak{u}^{m}\left(  s\right)  =\max_{\mathbf{p}^{m}\left(  s\right)  }%
\sum_{a^{1}\in A^{1}\left(  s\right)  }...\sum_{a^{N}\in A^{N}\left(
s\right)  }p^{1}\left(  a^{1}|s\right)  ...p^{N}\left(  a^{N}|s\right)
\left[  q^{m}\left(  s\right)  +\gamma\sum_{s^{\prime}}\Pi\left(  s^{\prime
}|s,a^{1},...,a^{N}\right)  \mathfrak{u}^{m}\left(  s^{\prime}\right)
\right]  , \label{eq02015}%
\end{equation}
where we have modified Fink's original notation to fit our own; in particular:

\begin{enumerate}
\item $\mathfrak{u}^{m}\left(  s\right)  $ is the expected value of
$u^{m}\left(  s\right)  $;

\item $p^{m}\left(  a^{m}|s\right)  $ is the probability that, given the
current game state is $s$, the $m$-th player plays action $a^{m}$;

\item $\mathbf{p}^{m}\left(  s\right)  =\left(  p^{m}\left(  a^{m}|s\right)
\right)  _{a^{m}\in A^{m}\left(  s\right)  }$ is the vector of all such
probabilities (one probability per available action);

\item $\Pi\left(  s^{\prime}|s,a^{1},a^{2},...,a^{N}\right)  $ is the
probability that, given the current state is $s$ and the player actions are
$a^{1},a^{2},...,a^{N}$, the next state is $s^{\prime}$ .
\end{enumerate}

\noindent Now choose any $n$ and any $s\in S_{n}$. For all $m\neq n$, the
$m$-th player has a single move: $A^{m}\left(  s\right)  =\left\{
\lambda\right\}  $, and so $p^{m}\left(  a^{m}|s\right)  =1$. Also, since
transitions are deterministic,%
\[
\sum_{s^{\prime}}\Pi\left(  s^{\prime}|s,a^{1},a^{2},...,a^{N}\right)
\mathfrak{u}^{n}\left(  s^{\prime}\right)  =\mathfrak{u}^{n}\left(
\mathbf{T}\left(  s,a^{n}\right)  \right)  .
\]
Hence, for $m=n$, (\ref{eq02015})\ becomes
\begin{equation}
\mathfrak{u}^{n}\left(  s\right)  =\max_{\mathbf{p}^{n}\left(  s\right)  }%
\sum_{a^{n}\in A^{n}\left(  s\right)  }p^{n}\left(  a^{n}|s\right)  \left[
q^{n}\left(  s\right)  +\gamma\mathfrak{u}^{n}\left(  \mathbf{T}\left(
s,a^{n}\right)  \right)  \right]  .\label{eq02016}%
\end{equation}
Furthermore let us define $\widehat{\sigma}^{n}\left(  s\right)  $ (for the
specific $s$ and $n$) by
\begin{equation}
\widehat{\sigma}^{n}\left(  s\right)  =\arg\max_{a^{n}\in A^{n}\left(
s\right)  }\left[  q^{n}\left(  s\right)  +\gamma\mathfrak{u}^{n}\left(
\mathbf{T}\left(  s,a^{n}\right)  \right)  \right]  .\label{eq02017}%
\end{equation}
If (\ref{eq02016}) is satisfied by more than one $a^{n}$, we set
$\widehat{\sigma}^{n}\left(  s\right)  $ to one of these arbitrarily. Then, to
maximize the sum in (\ref{eq02016}) the $n$-th player must set $p^{n}\left(
\widehat{\sigma}^{n}\left(  s\right)  |s\right)  =1$ and $p^{n}\left(
a^{n}|s\right)  =0$ for all $a^{n}\neq\widehat{\sigma}^{n}\left(  s\right)  $.
Since this is true for all states and all players (i.e., every player can,
without loss, use deterministic strategies) we also have $\mathfrak{u}%
^{n}\left(  s\right)  =u^{n}\left(  s\right)  $. Hence (\ref{eq02016})
becomes
\begin{equation}
u^{n}\left(  s\right)  =\max_{a^{n}\in A^{n}\left(  s\right)  }\left[
q^{n}\left(  s\right)  +\gamma u^{n}\left(  \mathbf{T}\left(  s,a^{n}\right)
\right)  \right]  =q^{n}\left(  s\right)  +\gamma u^{n}\left(  \mathbf{T}%
\left(  s,\widehat{\sigma}^{n}\left(  s\right)  \right)  \right)
.\label{eq02018}%
\end{equation}
For $m\neq n$, the $m$-th player has no choice of action and (\ref{eq02016})
becomes
\begin{equation}
u^{m}\left(  s\right)  =q^{m}\left(  s\right)  +\gamma u^{m}\left(
\mathbf{T}\left(  s,\widehat{\sigma}^{n}\left(  s\right)  \right)  \right)
.\label{eq02019}%
\end{equation}
We recognize that (\ref{eq02017})-(\ref{eq02019}) are (\ref{eq02012}%
)-(\ref{eq02013}); replacing $\mathfrak{u}^{n}\left(  \mathbf{T}\left(
s,a^{n}\right)  \right)  $ with $u^{n}\left(  \mathbf{T}\left(  s,a^{n}%
\right)  \right)  $ in (\ref{eq02017}) defines $\widehat{\sigma}^{n}\left(
s\right)  $ for every $n$ and $s$ and so yields the required deterministic
memoryless strategies $\widehat{\sigma}=\left(  \widehat{\sigma}^{1}%
,\widehat{\sigma}^{2},...,\widehat{\sigma}^{3}\right)  $.
\end{proof}

\section{The Qualitative MPRS Game\label{sec03}}

The qualitative MPRS game elements are identical to those of the quantitative
one, except for the payoff functions. The qualitative game: (i)\ does
\emph{not} have a turn payoff function; (ii)\ has total payoff function
\[
\widetilde{Q}^{n}\left(  s_{0},\sigma\right)  =\left\{
\begin{array}
[c]{rll}%
1 & \text{if} & Q^{n}\left(  s_{0},\sigma\right)  >0,\\
-1 & \text{if} & Q^{n}\left(  s_{0},\sigma\right)  <0,\\
0 & \text{if} & Q^{n}\left(  s_{0},\sigma\right)  =0;
\end{array}
\right.  \text{ }%
\]
It is easily checked that: $\widetilde{Q}^{n}\left(  s_{0},\sigma\right)  =1$
(resp. $\widetilde{Q}^{n}\left(  s_{0},\sigma\right)  =0$) iff $P_{n}$ is a
reacher (resp. an avoider) and his target set is entered (resp. not entered).
Accordingly, in the qualitative MPRS game $P_{n}$ \emph{wins} (resp.
\emph{loses})\ iff he achieves the maximum (resp. minimum)\ possible value of
$\widetilde{Q}^{n}$. More specifically, we have the following.

\begin{enumerate}
\item When $P_{n}$ is a reacher, he wins (resp. loses)\ iff $\widetilde{Q}%
^{n}\left(  s_{0},\sigma\right)  =1$ (resp. $\widetilde{Q}^{n}\left(
s_{0},\sigma\right)  =0$).

\item When $P_{n}$ is an avoider, he wins (resp. loses)\ iff $\widetilde
{Q}^{n}\left(  s_{0},\sigma\right)  =0$ (resp. $\widetilde{Q}^{n}\left(
s_{0},\sigma\right)  =-1$).
\end{enumerate}

In short, the quantitative $Q^{n}$'s defines the qualitative $\widetilde
{Q}^{n}$'s which are used to formalize win/lose criteria analogous to these of
reachability, safety, RAS and SIAS games; these are special cases of
qualitative MPRS:

\begin{enumerate}
\item two-player reachability games ($N=2$, $P_{1}$ a reacher with $R_{1}%
\neq\emptyset$ and $P_{2}$ an avoider with $R_{2}=R_{1}$);

\item safety games (same as the reachability game, with player roles interhanged);

\item SIAS\ games ($\forall n:P_{n}$ is an avoider);

\item RAS\ games ($\forall n:P_{n}$ is a reacher). \footnote{Note that, while
the RAS game can be seen as a \emph{variant} of the classic reachability game,
it is not a generalization thereof, because it does not involve a player with
safety objectives \cite{Chatterjee2003,Chatterjee2004a}. Similarly, the
classic safety game is not a SIAS\ game.}
\end{enumerate}

\noindent\noindent The most general MPRS game involves $N_{1}$ reachers and
$N_{2}$ avoiders; we can have more than one winners (e.g., if $P_{m}$ and
$P_{n}$ are reachers, both win if the token enters some $v\in R_{m}\cap
R_{n}\neq\emptyset$) and the same is true for losers.

It is easily checked that every $\widehat{\sigma}=\left(  \widehat{\sigma}%
^{1},...,\widehat{\sigma}^{N}\right)  $ which is a NE\ of the $Q^{n}$'s is
also a NE\ of the $\widetilde{Q}^{n}$'s. Hence, by Theorem \ref{prop0301}, we
have the following.

\begin{corollary}
Every qualitative MPRS game has a deterministic memoryless NE.
\end{corollary}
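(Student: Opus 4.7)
The plan is to invoke Theorem \ref{prop0301} and argue that the deterministic memoryless NE $\widehat{\sigma}=(\widehat{\sigma}^{1},\ldots,\widehat{\sigma}^{N})$ it produces for the quantitative game is automatically a NE of the qualitative game. The engine of the argument is the simple observation that $\widetilde{Q}^{n}(s_{0},\sigma)=\operatorname{sign}\bigl(Q^{n}(s_{0},\sigma)\bigr)$, so $\widetilde{Q}^{n}$ is a (weakly) monotone function of $Q^{n}$. Any deviation that strictly improves $\widetilde{Q}^{n}$ must therefore also strictly improve $Q^{n}$, and this will contradict the Nash inequality \eqref{eq02011} already secured in Theorem \ref{prop0301}.

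Concretely, I would fix an arbitrary player $n$, an initial state $s_{0}\in S$, and an arbitrary (not necessarily memoryless) strategy $\sigma^{n}$, and argue by contrapositive. Suppose $\widetilde{Q}^{n}(s_{0},\sigma^{n},\widehat{\sigma}^{-n})>\widetilde{Q}^{n}(s_{0},\widehat{\sigma}^{n},\widehat{\sigma}^{-n})$. I would then split into the two types of players. If $P_{n}$ is a reacher, then $Q^{n}(s_{0},\sigma)\in\{0\}\cup\{\gamma^{t'}:t'\geq 0\}$ and $\widetilde{Q}^{n}\in\{0,1\}$, so the only way the qualitative payoff can strictly increase is from $0$ to $1$; this forces $Q^{n}(s_{0},\widehat{\sigma})=0$ while $Q^{n}(s_{0},\sigma^{n},\widehat{\sigma}^{-n})=\gamma^{t'}>0$. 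If $P_{n}$ is an avoider, then $Q^{n}(s_{0},\sigma)\in\{0\}\cup\{-\gamma^{t'}:t'\geq 0\}$ and $\widetilde{Q}^{n}\in\{-1,0\}$, so the only way $\widetilde{Q}^{n}$ can strictly increase is from $-1$ to $0$; this forces $Q^{n}(s_{0},\widehat{\sigma})=-\gamma^{t'}<0$ while $Q^{n}(s_{0},\sigma^{n},\widehat{\sigma}^{-n})=0$. In either case we obtain $Q^{n}(s_{0},\widehat{\sigma}^{n},\widehat{\sigma}^{-n})<Q^{n}(s_{0},\sigma^{n},\widehat{\sigma}^{-n})$, violating \eqref{eq02011}.

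Hence no profitable deviation exists for the qualitative payoff either, so $\widehat{\sigma}$ is a NE of the qualitative game, and it is deterministic and memoryless by construction in Theorem \ref{prop0301}. I do not expect any real obstacle: the argument is essentially a one-line monotonicity observation, and the only thing to be careful about is the exhaustive enumeration of the small finite ranges of $Q^{n}$ and $\widetilde{Q}^{n}$ in the reacher versus avoider cases, which is where the sign relationship is verified.
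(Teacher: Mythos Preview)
Your proposal is correct and matches the paper's approach exactly: the paper simply observes that any NE $\widehat{\sigma}$ for the quantitative payoffs $Q^{n}$ is automatically a NE for the qualitative payoffs $\widetilde{Q}^{n}$ (since $\widetilde{Q}^{n}=\operatorname{sign}Q^{n}$) and then invokes Theorem~\ref{prop0301}. Your write-up actually supplies more detail than the paper, which leaves the monotonicity check as an ``easily checked'' remark.
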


\end{document}